\begin{document}
\begin{frontmatter}

\title{Axioms for parametric continuity of utility when the topology is coarse\tnoteref{mytitlenote}}
\tnotetext[mytitlenote]{ I thank the anonymous referees, Itzhak Gilboa, Sander Heinsalu, Atsushi Kajii, Ali Khan, Jeff Kline, Andrew McLennan, Aliandra Nasif, Wojciech Olszewski, John Quiggin, Maxwell Stinchcombe and Rabee Tourky for taking the time to read and provide feedback on earlier versions of this paper. 
I further thank Maxwell Stinchcombe for numerous conversations that helped to enrich my understanding of the subject matter of this paper. 
}

\author{Patrick H. O'Callaghan\fnref{myfootnote}}
\address{School of Economics, University of Queensland}
\fntext[myfootnote]{Email: \texttt{p.ocallaghan@uq.edu.au}, Orcid Id: 0000-0003-0606-5465.}

\begin{abstract}
In economics we often take as primitive a collection of preference orderings (on actions or alternatives) indexed by a parameter. 
Moreover, it is often useful to represent such preferences with a collection of utility functions that is continuous in the parameter. 
Existing representation theorems assume that the topology on the  parameter space is metrizable.
This excludes settings where the topology is coarse \eg~the weak$^*$ topology on a set of probability measures or the product topology on many function spaces. 
Yet such spaces are often normal (disjoint closed sets can be separated). 
We introduce an axiom on preferences for parametric continuity when actions are countable and the parameter space is normal. 
Utility is jointly continuous on actions $\times$ parameters when actions have the discrete topology. 
\end{abstract}

\begin{keyword}
parametric continuity\sep preferences\sep utility representation
\end{keyword}

\end{frontmatter}

\section{Introduction}\label{sec_Introduction}
A (topological) space is \emph{metrizable} if there exists a metric such that 1.~every open ball of the metric is also open in the space and 2.~every open set in the space can be written as a  union of open balls of the metric.
For instance, a discrete space (where every subset is open) is metrizable via the metric that assigns distance one to pairs of distinct points  and distance zero otherwise. 

In the present article, we are interested in utility representations of preference orderings that depend on a parameter. 
In particular, we identify the class of preferences that have a utility representation that is continuous in the parameter even when the parameter space is nonmetrizable. 
This feature distinguishes the present results from similar results (\citet{Hildenbrand_Joint_continuity,Mas-Colell_Joint_continuity,Levin_Joint_continuity}) where metrizability is a key assumption. 

Our results are intended for settings where the topology is too coarse to support a continuous metric (a failure of property 1 of metrizable spaces). 
This distinguishes the present results from the recent work of \citet[Theorem 4.1]{CCM_Joint_continuity} where the parameter space is allowed to be \emph{submetrizable}. 
A submetrizable space accommodates failures of property 2  by assuming the  topology can be coarsened (by excluding open sets)  in such a way that a metrizable space obtains. 

\subsection{Motivation}
In economics, the standard cause for a parameter space to be   nonmetrizable is a failure of property 1. 
The most common example is that of the product topology on a Cartesian product of uncountably many factors.  
Such parameter spaces arise: in the setting of Bayesian games  where the universal type space is formalised (\citet{MZ_Bayesian_analysis_for_games,DFM_Topologies_on_types}; in  the general equilibrium with infinite-dimensional commodity spaces (\citet{MZ_Infinite_dimensional_commodity_spaces}); and  in settings with continuous time stochastic processes.  

A simple  example of a nonmetrizable space is the following. 
\begin{example}\label{eg_Uncountable_product}
For some uncountable set $S$, let $\Theta=\{0,1\}^S$ be endowed with the product topology. 
Then the basic open sets in $\Theta$ are \emph{cylinder sets} $G=\bigtimes_{s\in S} G_s$ with  all but finitely many factors $G_s=\{0,1\}$. 
Let $d$ be a metric on $\Theta$. 
Then for any $\zeta\in \Theta$, the singleton set $\{\zeta\}$ is the intersection over the collection of balls $\left \{\theta: d(\theta,\zeta)<1/n\right\}$ such that $n=1,2,\dots$. 
This implies that these balls are finer than cylinder sets, for the intersection of any countable collection of cylinder sets has  uncountably many factors equal to $\{0,1\}$. 
Thus, $d$ is necessarily a discontinuous metric on $\Theta$ (with the product topology). 
\end{example}
The issue raised by \cref{eg_Uncountable_product} is particularly relevant when  preferences over actions are indexed by parameters. 
That is when preferences are described by a family of  binary relations (over actions) that vary over the parameter space. 
An  extension of \cref{eg_Uncountable_product} exposes the problem. 
\setcounter{example}{0}
\begin{example}[continued]\label{eg_Preferences}
Suppose there are just two actions $a$ and $b$. 
Moreover, suppose that $b$ is strictly preferred to $a$ on the open set of parameters $\theta$ such that $\theta\neq \zeta$ and that  $a$ is indifferent to $b$ at $\zeta$. 
Now such preferences are seemingly innocuous since $b$ is weakly preferred to $a$ at each $\theta\in\Theta$. 
Indeed it is trivial to find a family of utility functions that represents such preferences: let the utility of $a$ be normalised to zero for every $\theta\in \Theta$ and let the utility of $b$ equal to one for $\theta\neq\zeta$ and zero otherwise. 
Clearly, as a function on $\Theta$, the utility of $b$ is discontinuous. 
Indeed, similar to the way every metric on $\Theta$ is discontinuous,  every utility representation of such preferences is discontinuous in the parameter. 
\end{example}
Many theoretical results rely on parametric continuity of the utility representation. 
For instance, this is a necessary requirement for the envelope theorems of \citet{MS_Envelope_theorems}, where, similar to the present article, the main results do not require that actions form a topological space. 

This provides the primary motivation for the present article. 
Although there is no hope for parametric continuity of utility for the preferences of \cref{eg_Uncountable_product}, this is no justification for avoiding nonmetrizable spaces altogether. 
Indeed there is a host of perfectly reasonable preferences, indexed by the parameter space of \cref{eg_Uncountable_product},  that  have a parametrically continuous utility representation. 
And yet the preferences  of \cref{eg_Uncountable_product} are also reasonable. 

\subsection{Overview}
Our goal is to identify conditions, both on the parameter space and on preferences, that   distinguish those preferences that possess a parametrically continuous representation and those which do not. 
The basic results in this respect are  \cref{thm_Main} and \cref{prop_Upper_bound}. 
These results reveal how far, beyond metrizability of the parameter space, we may go without strengthening a standard axiom for continuity. 
\emph{Pairwise stability} requires that the set of parameters such that one action is strictly preferred to another is open. 

Pairwise stability is standard in the sense that, when actions form a discrete space, it is equivalent to the more commonly encountered \emph{closed graph} axiom we define on \cpageref{CG}. 
 The closed graph axiom has a long history in the literature that derives a utility representation that is jointly continuous on the product of  actions and parameters (\citet{Hildenbrand_Joint_continuity,Mas-Colell_Joint_continuity,Levin_Joint_continuity,CCM_Joint_continuity}). 
 
\Cref{thm_Main} holds for parameter spaces that are perfectly normal. 
Whilst the space of \cref{eg_Uncountable_product} is normal (disjoint closed sets can be separated) it is not perfect (because it contains closed sets that cannot be written as a countable intersection of open sets). 
\Cref{prop_Upper_bound} shows that every parameter space that is not perfectly normal supports preferences that, like those of \cref{eg_Uncountable_product}, have no continuous representation. 
We therefore strengthen pairwise stability to \emph{perfect pairwise stability}: the set of parameters such that one action is strictly preferred to another is not only open, it is also the union of a countable collection of closed sets. 
Perfect pairwise stability rules out preferences such as those of \cref{eg_Uncountable_product}. 
One motivation for this exclusion  is that such preferences can never be identified by asking a countable number of questions about preferences on the closure of basic sets (cylinder sets in that case) such that strict preference for action $b$ over action $a$ holds uniformly.

Our main theorem then holds for a parameter space that is normal. 
Together with a standard ordering axiom, perfect pairwise stability is shown to be necessary and sufficient for a utility representation that is continuous in the parameter. 
The following three steps outline the proof that our axioms are sufficient for the desired representation. 

First, perfect pairwise stability ensures that each pair of distinct actions determines a continuous pseudometric on the parameter space. 
This pseudometric may assign distance zero to a pair of distinct parameters, but only if the two actions are found to be indifferent at both parameters. 

Second, because the set of actions is countable,  the above pseudometrics combine to form a new  pseudometric. 
This too is continuous on the parameter space. 
The latter assigns distance zero to a pair of parameters only if preferences coincide. 
(In \citet[p.287]{DFM_Topologies_on_types} construct a similar pseudometric on players' type space in Bayesian games.) 

Third, although our pseudometric will not typically be a metric, the fact that it is constructed using preferences allows us to induce a quotient space such that pairs of points are only identified if preferences coincide. 
On this quotient space, our pseudometric is a metric, so that \cref{thm_Main} applies. 
Finally, the fact that the pseudometric is continuous and generated by preferences allows us to extend the representation to the original parameter space.

\Cref{sec_Model} introduces the relevant topological notions  for the parameter space and  the axioms on  preferences and then concludes with minor results that relate the axioms. 
\Cref{sec_Parametric_continuity} presents our main theorems for parametrically continuous utility representations where  actions need not form a topological space. 
In \cref{sec_Joint_continuity}, we consider a discrete action space  and derive representations that are  jointly continuous on actions $\times$ parameters. 
With a remark on obstacles to representing preferences on uncountable action sets with normal parameter spaces and the appendix of proofs, we conclude.

\section{Parameters, preferences and  axioms}\label{sec_Model}

Recall that a set $X$ becomes a topological space provided it is endowed with collection $\tau$ of  subsets such that $G\in \tau$ is \emph{open} (in $X$). 
$\tau$ is closed under arbitrary unions and finite intersections. 
As usual, reference to  $\tau$ is typically suppressed.  
The coarsest or weakest possible topology is the trivial  topology, where the only open sets are $X$ and the empty set $\emptyset$. 
At the other extreme, $X$ is \emph{discrete} if its topology is such that every subset of $X$ is open. 
\subsection{The parameter space} \label{sec_Topology}

Let $\Theta$ denote a nonempty topological space of  \emph{parameters}. 
A basic topological requirement  that we take for granted (unless stated otherwise) is that every singleton $\left\{\theta\right\}$ is the closed complement of set that is open in $\Theta$. 
This is the $\mc T_1$ separation axiom in topology. 
In the present section, we will appeal to the stronger, Hausdorff, assumption: if $\theta\neq \theta'$, then there exist disjoint open neighbourhoods $N$ and $N'$ of $\theta$ and $\theta'$ respectively. 

In \cref{sec_Parametric_continuity}, we will require that $\Theta$ is also \emph{normal}: if $F$ and $F'$ are disjoint closed subsets of $\Theta$, then there exist disjoint open neighbourhoods $G$ and $G'$ of $F$ and $F'$ respectively. 
Clearly, every normal space that is $\mc T_1$ is also Hausdorff.  
Conversely, every compact Hausdorff space is  $\Theta$ normal. 

The final separation  axiom that we consider is the following: $\Theta$ is \emph{perfect} if every closed subset of $\Theta$ is the intersection of a countable collection of open sets. 
When a space is both perfect and normal it is perfectly normal.  

Every metrizable space is perfectly normal. 
Interesting mathematical examples that perfectly normal and nonmetrizable include the split interval and the long line (see \citet{Johnson_Compact_nonmetrizable} and \citet{SS_Counterexamples} respectively). 
Whilst such spaces rarely find their way into economics, economists frequently adopt spaces that are  nonmetrizable, yet normal. 
First, note that every ordered space, where the topology is  generated by the open intervals of a linear order, is normal. 
Second, every compact Hausdorff space is normal. 
Then, since every product of compact Hausdorff spaces is compact Hausdorff, many useful spaces that are nonmetrizable are normal. 
Third, the weak$^*$ topology on a set of probability measures is often normal and  nonmetrizable. 
The following example shows that this is the case for the parameter space that \citet{GS_Context_EU} adopt to index preferences. 

\begin{example}
Let $\Sigma$ denote an algebra of subsets of $S$ and let $B(\Sigma)$ be the set of bounded $\Sigma$-measurable real-valued functions on $S$ (with the uniform or $\sup$ norm). 
Via the Riesz representation theorem, each continuous linear functional on $B(\Sigma)$ is of the form $\int_S \cdot \diff \mu$ for some bounded and finitely additive measure $\mu: \Sigma\rightarrow \R$. 
The latter space of measures is commonly denoted as $\textup{ba}(\Sigma)$. 
The weak$^*$ topology on $\textup{ba}(\Sigma)$ is the coarsest topology such that the map  $\mu\mapsto \int_S f\diff \mu$ is continuous on $\textup{ba}(\Sigma)$ for every $f\in B(\Sigma)$. 
The set $\Delta$ of finitely additive probability measures on $\Sigma$  coincides with the set $\left\{\mu: \int_S 1\diff \mu=1\right\}$. 
Since $\{1\}$ is closed in $\R$ and $\mu\mapsto \int_S 1\diff \mu$ is continuous, $\Delta$ is weak$^*$-closed in $\textup{ba}(\Sigma)$. 
Then, since every closed subset of a compact set is compact,  $\Delta$ is weak$^*$-compact by the Banach Alaoglu theorem:   the closed unit ball of $\textup{ba}(\Sigma)$ is weak$^*$-compact. 
Finally, by the Hahn-Banach theorem, $\Delta$ is also Hausdorff. 
Thus, with the weak$^*$ topology, $\Delta$ is a normal space. 
\end{example}
Although normal spaces are a substantial generalisation of metrizable spaces, Stone's theorem (\citet[Theorem 4]{Stone_Product_spaces})  tells us that the product of uncountably many metrizable spaces fails to be normal unless all but countably many factors are compact. 
Thus, when $I$ is the unit interval, $\R^I$ fails to be normal. 
Whilst this shows that our restriction to normal spaces is substantial, it is possible to address this particular issue by considering the one-point  compactification of each factor. 
(See \cite[Ch.16]{Taylor_Measure_theory_and_integration}, where this step is taken in the derivation of the  measure of a standard Brownian motion.) 

Finally, one might argue that normal topologies are not actually that coarse. 
After all it is easy to imagine coarse spaces where the $\mc T_1$ separation axiom fails to hold. 
Perhaps in a  game where agents only have access to partitions of $\Theta$; or if $\Theta$ is a weakly ordered space (weak orders are defined in the following subsection). 
Fortunately, this paper provides methods for accommodating such spaces. 
In essence, it is possible to work with such spaces provided it is possible to pass to a quotient space where  the axioms on preferences we describe next hold. 
Indeed a similar step is key to the proof of  \cref{thm_Normal}.  




\subsection{Ordering of actions given the parameter}\label{sec_Preferences}
Let  $A$ denote a nonempty and countable set of \emph{actions} or \emph{alternatives}.  
In the absence of contrary statements, no external topology on $A$ is imposed. 
The primitive data regarding the decision maker's preferences  comes in the form of statements such as ``at $\theta$, action $b$ is strictly preferred to action $a$". 
\emph{Preferences at  $\theta$} are summarised by a subset $\prec_\theta$ of $A^2= A\times A$. 
\emph{Preferences} are defined to be a family $\left\{\prec_\Theta\right\}\defeq\left\{\prec_\theta\,:\theta\in \Theta\right\}$ of binary relations on $A$. 
Equivalently, the map $\theta\mapsto \,\prec_\theta$ is a correspondence on $\Theta$ with values in $A^2$. 

When  strict preference is primitive, the following condition is standard. 
\begin{taggedaxiom}{$\mathcal{O}$}\label{Order}
Preferences satisfy \emph{asymmetry} and \emph{negative transitivity} of $\prec_\theta$ for every  $\theta\in \Theta$. That is, respectively,
\begin{enumerate}[label=$\mathcal{O}_{\arabic*}$]
\item\label{Asy}
For every $a,b\in A$, the set $\left\{\theta: \text{$a\prec_\theta b$ and $b\prec_\theta a$} \right\}$ is empty. 

\item\label{NT}
For every $a,b,c\in A$ and $\theta\in \Theta$, if $a\prec_\theta b$ then   $a\prec_\theta c$ or $c\prec_\theta b$. 
\end{enumerate}
\end{taggedaxiom}
\ref{Order} ensures $\prec_\theta$ is an  \emph{ordering} of $A$ (alternative terms include \emph{asymmetric weak ordering} \citep{Fishburn_Utility_theory_Book,Fishburn_Foundations_of_EU} and \emph{strict weak ordering}). 
For $a$ and $b$ that $\prec_\theta$ finds incomparable  (neither $a\prec_\theta b$ nor $b\prec_\theta a$) we write $a\sim_\theta b$. 
 \ref{NT} ensures that $\left\{\sim_\Theta\right\}$ is a collection of  \emph{indifference} (transitive incomparability) relations  on $A$. 
Recall that $\sim_\theta$ is \emph{transitive} provided that $a\sim_\theta b$ and $b \sim_\theta c$ together  imply  $a\sim_\theta c$ for every $(a,b,c)\in A^3$. 
Finally, weak preference $\precsim_\theta$ is then the union of $\prec_\theta$ and $\sim_\theta$. 
By construction,  $\precsim_\theta$ is \emph{complete}: for every $(a,b)\in A^2$,  $a\precsim_\theta b$ holds or $b\precsim_\theta a$ holds. 
Together, \ref{Asy} and \ref{NT} ensure $\precsim_\theta$ is also transitive. 
\subsection{Perfect pairwise stability of strict preference}\label{sec_Pairwise_stability}
The following \emph{closed graph} axiom provides the traditional route to parametric continuity from preferences. 
With the understanding that $A$ is also a topological space and that $\Theta\times A^2$ is endowed with the product topology, 
the following formulation is due to \citet{Hildenbrand_Joint_continuity}. 
\begin{taggedaxiom}{$\mc {CG}$} \label{CG}
The set $\left\{(\theta,a,b): a\precsim_\theta b\right\}$ is closed in $\Theta\times A^2$. 
\end{taggedaxiom}

In contrast the  axioms that  we now present do not require that $A$ is a topological space.
We say that strict preference is \emph{pairwise stable at $\theta$} provided, for every $a,b\in A$ such that $a\prec_\theta b$, there exists an open neighbourhood $N$ of $\theta$ in $\Theta$ such that $a\prec_\eta b$ for every $\eta$ in $N$. 
Since any union of open sets is open, the following axiom captures \emph{pairwise stability} of strict preference.
\begin{taggedaxiom}{$\mathcal{PS}$}\label{PS}
For every $a,b\in A$, the set $\{\theta: a \prec_\theta b\}$ is open in $\Theta$.
\end{taggedaxiom}
To our knowledge, \ref{PS} first appears in the literature on decision theory in \citet{GS_Context_EU}. 
When $\Theta$ is not perfect, preferences may satisfy \ref{PS} and yet  fail to be what we call \emph{perfectly pairwise stable}.

\begin{taggedaxiom}{$\mc {PS^*}$}\label{Perfect}
For every $a,b\in A$,  $\{\theta: a\prec_\theta b\}$ is open in $\Theta$ and equal to the union of a countable collection of sets that are closed in $\Theta$. 
\end{taggedaxiom}

When preferences satisfy \ref{Perfect}, there exists a countable collection $\left\{F_n\right\}$ of closed subsets of $\left \{\theta: a\prec_\theta b\right\}$, such that, for each $\eta\in \left \{\theta: a\prec_\theta b\right\}$, there exists $m\in \mbb N$ such that $ \eta\in F_m$. 
For instance, if $\Theta$ were levels of  wealth, $F_m$ would be a finite union of closed intervals in $\R$. 
To our knowledge, \ref{Perfect} is novel and, as we will show,   it is the key axiom to parametric continuity when the parameter space is normal, but not perfectly so.  




\subsection{Relating the axioms}\label{sec_Discussion_axioms}
In the present subsection, in order to compare \ref{CG} with \ref{PS} and \ref{Perfect},  we assume that $A$ is a topological space. 
This allows us to speak of continuity properties of the  correspondence $\theta\mapsto \,\prec_\theta$. 
First, recall that it is \emph{lower hemicontinuous} if $\left\{\theta:\, \prec_\theta\cap \,B\neq  \emptyset\right\}$ is open for every open $B\subseteq A^2$.

\begin{remark}[proof on \cpageref{proof_lhc}]\label{prop_lhc}
If preferences satisfy \ref{PS}, then $\theta\mapsto \,\prec_\theta$  is lower hemicontinuous. 
When $A$ is discrete, the converse is also true. 
\end{remark}
\begin{proofatend}[\textsc{Proof of \cref{prop_lhc} of \cpageref{prop_lhc}}]\label{proof_lhc}

The map $\theta\mapsto \,\prec_\theta$ defines a correspondence on $\Theta$ with values in the power set of $A\times A$. This  is \lhc~provided that, for every open $G\subseteq A\times A$, the set $\{\theta: \,\prec_\theta \hskip-2pt\cap \,G \neq \emptyset\}$ is open. This latter set is just the union of $\{\theta:a\prec_\theta b\}$ such that $a\times b\in G$. Thus, \ref{PS} implies  \lhc~of $\theta\mapsto \,\prec_\theta$. 

When $A$ is discrete, every $B\subseteq A\times A$ is both open and closed. 
If $\theta\mapsto\,\prec_\theta$ is \lhc, take $B=\{a\times b\}$. 
Then $\{\theta: \prec_\theta \cap \,\,B \neq \emptyset\}$ is open and equal to $\{\theta:a\prec_\theta b\}$. 
Thus, when $A$ is discrete the converse holds, as required. 
\end{proofatend}

Even when \ref{Order} holds, \ref{PS} does not imply that the weak preference correspondence $\theta\mapsto\,\,\precsim_\theta$ is \emph{upper hemicontinuous}: $\left\{\theta:\,\, \precsim_\theta \cap \, F\neq \emptyset\right\}$ is closed for every closed $F\subseteq A^2$. 
Indeed, let $F$ be any closed, infinite subset of $A^2$, then  although \ref{Order} and \ref{PS} together ensure the set $\{\theta: a\precsim_\theta b\}$ is closed, the union over the pairs $a\times b\in F$, need not be closed. 
On the other hand, we have

\begin{remark}[proof on \cpageref{proof_Closed_graph}]\label{prop_Closed_graph}
Let $A$ be discrete, let $\Theta$ be Hausdorff and let preferences satisfy \ref{Asy}.
Then \ref{PS} holds if and only if \ref{CG} holds. 
\end{remark}
\begin{proofatend}[\textsc{Proof of \cref{prop_Closed_graph} of \cpageref{prop_Closed_graph}}]\label{proof_Closed_graph}
Suppose \ref{PS} holds.
Let $D$ be a directed set and let $\eta_\nu\times a_\nu\times b_\nu\rightarrow \eta\times a\times b$ be a net such that $a_\nu\precsim_{\eta_\nu} b_\nu$ for each $\nu\in D$. 
Then there exists $\mu\in D$ such that for every $\nu\geq \mu$, $\eta_\nu\times a_\nu\times b_\nu=\eta_\nu\times a\times b$. 
Now suppose that $b\prec_\eta a$, so that the graph of $\theta\mapsto \,\,\precsim_\theta$ is not closed. 
Then by \ref{PS}, there exists a neighbourhood $N$ of $\eta$ such that $a\prec_\theta b$ for every $\theta\in N$. 
Since the net converges to $\eta\times a\times b$, there exists $\mu'$ such that $\eta_\nu\in N$ for every $\nu\geq \mu'$. 
But then for every $\nu\geq \max\{\mu,\mu'\}$, we have both $a\precsim_{\eta_\nu} b$ (because $\eta_\nu\times a\times b$ belongs to the graph of $\precsim_\Theta$)  and $b\prec_{\eta_\nu}  a$ (because $\eta_\nu\in N$), a contradiction of \ref{Asy}.

Now suppose that the graph of $\theta\mapsto \,\,\precsim_\theta$ is closed. 
Then for fixed $a,b\in A$, the set $\left\{(\theta,a,b): a\precsim_\theta b\right\}$ is closed. 
By \ref{Asy}, this is equivalent to \ref{PS}.
\end{proofatend}

Experiments in the field and the laboratory frequently deal with discrete action sets. 
Since (perfect) pairwise stability is easily understood, 
\cref{prop_Closed_graph} and our final remark provide a way to evaluate the hypothesis that preferences satisfy \ref{CG}. 
\begin{remark}[proof on \cpageref{proof_Metrizable_Theta}]\label{prop_Metrizable_Theta}
Let $A$ be discrete, let $\Theta$ be perfectly normal and let preferences satisfy \ref{Asy}.
Then \ref{PS}, \ref{Perfect} and \ref{CG} are equivalent. 
\end{remark}
\begin{proofatend}[\textsc{Proof of \cref{prop_Metrizable_Theta} of \cpageref{prop_Metrizable_Theta}}]\label{proof_Metrizable_Theta}
When $\Theta$ is perfect, every closed set is the intersection of a countable collection of open sets. 
Equivalently, every open set is the countable union of closed sets. 
Fix $a,b\in A$ and let $G=\left\{\theta: a\prec_\theta b\right\}$. 
Since $\Theta$ is perfect, $G=\bigcup_1^\infty F_n$ where $F_1,F_2,\dots$ is a collection of closed sets. 
Thus, \ref{PS} implies \ref{Perfect} whenever $\Theta$ is perfect. 
Clearly, the converse holds regardless of the topology on $\Theta$. 
The proof follows by \cref{prop_Closed_graph} and the fact  that every normal $\mc T_1$ space is Hausdorff. 
\end{proofatend}



\section{Parametric continuity of utility}\label{sec_Parametric_continuity}
The present section is purely about parametric continuity of utility. 
Thus, no topology (discrete or otherwise) is required of  $A$. 
Our main goal is \cref{thm_Normal}, where  we obtain a parametrically continuous utility representation for the case where $ \Theta$ is normal and preferences satisfy \ref{Perfect}. 

Recall that  a \emph{utility representation} of $\prec_\theta$ is a function $U(\cdot,\theta):A\rightarrow \R$ such that, for every $a,b\in A$,  $a\prec_\theta b$ if and only if $ U(a,\theta)<U(b,\theta)$.  
A \emph{(utility) representation of preferences} is a function $U: A\times \Theta\rightarrow \R$ such that  $U(\cdot,\theta)$ is a utility representation of $\prec_\theta$ for every $\theta\in \Theta$.   
The representation $U$ is \emph{parametrically continuous} if $U(a,\cdot)$ is continuous on $\Theta$ for every $a\in A$. 
\subsection{Perfectly normal parameter spaces}

Our first theorem takes us beyond (sub)metrizable parameter spaces whilst maintaining the standard continuity axiom \ref{PS}. 
\begin{theorem}\label{thm_Main} Let  $A$ be countable and let $\Theta$ be perfectly normal. 
Preferences have a parametrically continuous  representation if and only if  both \ref{Order} and \ref{PS} hold. 

\end{theorem}
\begin{proof}[\textsc{Proof of \cref{thm_Main}}]
The proof that the axioms are necessary is provided in the proof of \cref{thm_Normal}. 
The proof that the axioms are sufficient for a parametrically continuous representation proceeds by induction on $A$. 
Since the initial case (\cref{step_Initial}) is useful for the discussion that follows we present it in the main text. 
In the appendix, the inductive case appeals to Michael's selection theorem (a characterisation of perfectly normal spaces).
\begin{step}\label{step_Initial}
Let $A=\{a,b\}$. By \ref{Asy} and \ref{PS}, $F=\left\{\theta: a\sim_\theta b\right\}$ is closed in $\Theta$. 
Since $\Theta$  is perfect, there exists $\left\{G_n:n\in \mbb N\right\}$ of open sets satisfying $\bigcap_1^\infty G_n=F$. For each $n$, note that $F$ and  $\Theta\bs G_n$ are disjoint and the latter is also closed. 
Since $\Theta$ is normal, the Urysohn lemma  guarantees the existence of a continuous, real-valued function on $\Theta$ such that $f_n(\theta)=0$ on $F$, $f_n(\theta)=1$ on $\Theta\bs G_n$, and $0\leq f_n(\theta)\leq 1$ otherwise. 

Let $f=\sum_1^\infty 2^{-n}f_n$ and note that $f:\Theta\rightarrow [0,1]$ is the continuous and  uniform limit of partial sums $\sum_1^m 2^{-n}f_n$ as $m\rightarrow\infty$.  
Moreover, since every $\theta\in \Theta\bs F$ belongs to some $\Theta\bs G_n$, $f(\theta)=0$ if and only if $a\sim_\theta b$. 
Let $U(a,\cdot)$ be the zero function on $\Theta$. 
We obtain a utility function for each $\prec_\theta$ by taking
\begin{equation*}
U(b,\theta)\defeq\left \{
\begin{array}{rl}
f(\theta)& \text{if $a\prec_\theta b$,} \\
-f(\theta)&  \text{otherwise.}
\end{array}\right.
\end{equation*}
To complete the proof of \cref{step_Initial}, it remains to confirm that  $U(b,\cdot)$ is continuous. 
This follows almost immediately from continuity of $f$. 
In particular, suppose that $\theta$ satisfies $a\prec_\theta b$ so that $U(b,\theta)=f(\theta)$. 
Then  \ref{PS} ensures the existence of an open neighbourhood $N_\theta$ such that $a\prec_\eta b$ for every $\eta\in N_\theta$. 
Then $U(b,\cdot)=f(\cdot)$ on $N_\theta$. 
This confirms that $U(b,\cdot) $ is continuous on $N_\theta$. 
By a similar argument, the fact that $-f$ is continuous allows us to conclude that $U(b,\cdot)$ is continuous at every $\theta$ such that $b\prec_\theta a$. 
Finally, for $\theta$ such that $a\sim_\theta b$, note that 
$f(\theta)=-f(\theta)$. 
For any directed set $D$, let  $\left\{\theta_\nu: \nu\in D\right\}$ be a net that converges to $\theta$. 
Then  $U(b,\theta_\nu)$ converges to $U(b,\theta)$ since $f$ and $-f$ are continuous and have the same limit. 
By \ref{Order}, this accounts for every $\theta\in \Theta$. 

The proof of \cref{thm_Main} continues on \cpageref{step_Inductive}. 
\qedhere
\end{step}\phantom\qedhere
\end{proof}\vskip-20pt
\begin{proofatend}[\textsc{Remaining steps in the proof of \cref{thm_Main} of \cpageref{thm_Main}}]\label{proof_Main}
\setcounter{step}{1}
\begin{step}[\textsc{Inductive step}]\label{step_Inductive}
  
 Let $\{1,2, 3 \dots\}$ be an arbitrary enumeration of $A$, and let $[j]\subseteq A$ denote the first $j$ elements of the enumeration. Fix $j\in A$. 
By the induction hypothesis, there exists a parametrically continuous (utility) representation $U^{j-1}: [j-1]\times \Theta\rightarrow [-1,1]$. 

For each $a\in [j-1]$ take $U^j(a,\cdot)\defeq U^{j-1}(a,\cdot)$. 
The inductive step is complete if we can find a function $U^j(j,\cdot)$ that satisfies the properties of $f$ in the following version of Michael's selection theorem \citep[Theorem 3.1''']{Michael_Selection}. 

\begin{theorem*}[\citet{GS_Restatement}]\label{thm_Michael}
$\Theta$ is perfectly normal if and only if, whenever $g, h : \Theta\rightarrow \R$ are respectively upper and lower semi-continuous functions and $g \leq h$, there is a continuous $f : \Theta\rightarrow \R$ such that $g \leq f \leq h$ and $g(\theta) < f (\theta) < h(\theta)$ whenever $g(\theta)<h(\theta)$\,.
\end{theorem*}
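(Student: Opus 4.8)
The plan is to prove both implications, with the converse carrying essentially all of the weight. For the easy direction, assume the insertion property and let $F\subseteq X$ be closed. The indicator $g=\mathbf{1}_F$ (equal to $1$ on $F$ and $0$ elsewhere) is upper semi-continuous, and $h\equiv 1$ is continuous, hence lower semi-continuous, with $g\le h$. An inserted $f$ then satisfies $f=1$ on $F$ and $0<f(x)<1$ for $x\notin F$, so $F=(1-f)^{-1}(0)$ is a zero set. By Definition~\ref{def_Perfectly_normal_zero_set}, $X$ is perfectly normal.

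For the converse, assume $X$ is perfectly normal. Composing with an increasing homeomorphism $\R\to(0,1)$ preserves upper and lower semi-continuity, continuity, and every weak and strict inequality, so I may assume $g,h\colon X\to(0,1)$ with $g\le h$. Since $h-g$ is lower semi-continuous, the set $D=\{x:g(x)<h(x)\}$ is open and $C=X\bs D=\{g=h\}$ is closed. I will use two ingredients. The first is the classical non-strict insertion theorem (Kat\v{e}tov--Tong): in any normal space, $g\le h$ with $g$ upper and $h$ lower semi-continuous admits a continuous $f$ with $g\le f\le h$; this is the standard recursion building a monotone family of open sets via Urysohn's lemma, and perfect normality implies normality. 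The second ingredient, and the crux, is strictness on $D$, which is delicate precisely because the gap $h-g$ may tend to $0$ along the boundary of $D$, so that no fixed continuous margin separates $g$ from $h$.

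The idea is to shrink the bounds strictly inward before inserting. I claim perfect normality yields a continuous $\epsilon\ge 0$ with $2\epsilon\le h-g$ everywhere and $\epsilon>0$ exactly on $D$. For each $n$, the set $\{x:h(x)-g(x)\le 1/n\}$ is closed, hence a zero set by Definition~\ref{def_Perfectly_normal_zero_set}; pick a continuous $\psi_n\ge 0$ vanishing exactly on it, and set
\[
\epsilon=\tfrac12\sum_{n=1}^{\infty}2^{-n}\min\!\left(\psi_n,\tfrac1n\right).
\]
The series converges uniformly (its $n$-th term is bounded by $2^{-n}$), so $\epsilon$ is continuous, exactly as in Example~\ref{eg_Preliminary_observation}. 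A pointwise check at $x$ with $h(x)-g(x)=c>0$ shows that only the terms with $1/n<c$ contribute and each is below $2^{-n}c$, giving $2\epsilon(x)<c$ and $\epsilon(x)>0$; on $C$ every $\psi_n$ vanishes, so $\epsilon=0$. Now set $g'=g+\epsilon$ (upper semi-continuous) and $h'=h-\epsilon$ (lower semi-continuous). Then $h'-g'=(h-g)-2\epsilon\ge 0$, so $g'\le h'$, while $g\le g'\le h'\le h$ with $g<g'$ and $h'<h$ on $D$ and $g'=h'=g=h$ on $C$.

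Finally, apply the non-strict insertion theorem to $g'\le h'$ to obtain a continuous $f$ with $g'\le f\le h'$. Then $g\le g'\le f\le h'\le h$ everywhere; on $D$ this reads $g<g'\le f\le h'<h$, hence $g<f<h$, while on $C$ it forces $f=g=h$. Transporting back through the homeomorphism yields the desired $f$. I expect the main obstacle to be exactly the strictness handled in the previous paragraph: mere normality already delivers the non-strict insertion, so the entire force of perfect normality must be spent producing the continuous margin $\epsilon$ that is positive precisely on $\{g<h\}$ yet dominated by the semi-continuous gap $h-g$; the zero-set property of Definition~\ref{def_Perfectly_normal_zero_set} is what makes this construction possible.
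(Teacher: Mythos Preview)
The paper does not prove this theorem; it is quoted as a known characterisation of perfect normality, attributed to \citet{GS_Restatement} as a restatement of Michael's selection theorem, and is used as a black box in the proof of Theorem~\ref{thm_Main}. There is therefore no in-paper proof to compare against.

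Your argument is correct. The easy direction is exactly as intended: the indicator of a closed set is upper semi-continuous, the inserted $f$ must equal $1$ on $F$ and lie strictly below $1$ off $F$, so $1-f$ exhibits $F$ as a zero set. For the substantive direction you combine two standard ingredients in the right way. The Kat\v{e}tov--Tong theorem (available because perfect normality implies normality) handles non-strict insertion, and the entire perfect-normality hypothesis is spent constructing the continuous margin $\epsilon$ with $0\le 2\epsilon\le h-g$ and $\epsilon>0$ exactly on $\{g<h\}$. Your construction of $\epsilon$ is sound: the sets $\{h-g\le 1/n\}$ are closed by lower semi-continuity of $h-g$, hence zero sets by Definition~\ref{def_Perfectly_normal_zero_set}; the truncated series $\tfrac12\sum 2^{-n}\min(\psi_n,1/n)$ converges uniformly, is zero on $\{g=h\}$, is positive on $\{g<h\}$, and the bound $2\epsilon(x)<h(x)-g(x)$ follows because only indices with $1/n<h(x)-g(x)$ contribute and each such term is strictly below $2^{-n}(h(x)-g(x))$. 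Setting $g'=g+\epsilon$, $h'=h-\epsilon$ and inserting non-strictly between them then gives the required strict insertion. The reduction to $(0,1)$ via an increasing homeomorphism is harmless (it preserves all the relevant order and semi-continuity structure) though not strictly needed, since $g,h$ are already real-valued.
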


In our setting, $g$ and $h$ will be envelope functions. 
To ensure they are well-defined, we introduce two fictional actions $\underline a$ and $\overline a$. 
These satisfy the property: $ \underline a \precsim_\theta k\precsim_\theta\overline a$ for all $(k,\theta)\in [j]\times\Theta$. 
Accordingly, we define $[j-1]'=[j-1]\cup \{\underline a, \overline a\}$, and let $U^j(\underline a, \cdot)\equiv -1$ and $U^j(\overline a, \cdot)\equiv +1$. 
Both are clearly continuous functions on $\Theta$.
Moreover, for all $\theta\in \Theta$, the following functions are well-defined.
\begin{align*}
g(\theta)\defeq &\max\left\{U^j(k,\theta): \text{$k\precsim_\theta j$ and  $k\in [j-1]'$}\right\},\\
h(\theta)\defeq &\min\left\{U^j(k,\theta): \text{$j\precsim_\theta k$ and $k\in [j-1]'$}\right\}.
\end{align*}
In the three claims that follow, we prove that $g$ and $h$ satisfy the conditions for Michael's selection theorem. 
The inductive step is then complete since $U^j$ is a parametrically continuous representation with values in $[-1,1]$.
\begin{claim}\label{lem_Upp_geq_low}
For all $\theta\in \Theta$, $g(\theta)\leq h (\theta)$\,.
\end{claim}
\begin{proof}[\textsc{Proof of \cref{lem_Upp_geq_low}}]
Fix $\theta$. 
By construction, there exist $k,l\in [j-1]'$ satisfying $g(\theta)=U^j(k,\theta)$ and $
h(\theta)=U^j(l,\theta)$. 
By definition, $k\precsim_\theta j$ and $j\precsim_\theta l$. 
By \ref{NT}, $k\precsim_\theta l$ and the inductive hypothesis then  ensures that $g(\theta)\leq h(\theta)$.
\end{proof}
\begin{claim}\label{lem_Upp=low}
For all $\theta\in \Theta:$ $g(\theta)=h(\theta)$ iff $k\sim_\theta j$ for some $k\in [j-1]$.
\end{claim}
\begin{proof}[\textsc{Proof of \cref{lem_Upp=low}}]
If $g(\theta)=h(\theta)$, then, by construction, there is some $k\in [j-1]'\cap\{l: l\precsim_\theta j\}\cap\{l:j\precsim_\theta l\}$.
By \ref{Asy}, for every such $k$, $k\sim_\theta j$. Conversely, if  $k\sim_\theta j$, then  both $k\precsim_\theta j$ and $j\precsim_\theta k$.
\end{proof}
\begin{claim}\label{lem_Usc}
$g:\Theta\rightarrow \R$ is upper semicontinuous.
\end{claim}
A symmetric argument to the one that follows, but with inequalities and direction of weak preference reversed, shows that $h$ is lower semicontinuous.
\begin{proof}[\textsc{Proof of \cref{lem_Usc}}]
Recall (or see \cite[p.101]{Kelley_Topology}) that $g$  is upper semicontinuous provided the set 
$\{\theta:r \le g(\theta)\}$ is closed for each $r\in \R$. Note that by the construction of $g$,
\begin{align*}
\{\theta: r\le g(\theta)\}=\bigcup_{k\in[j-1]'}\left(\{\theta: r \le U^j(k,\theta)\}\cap\{\theta:k\precsim_\theta j\}\right).
\end{align*}
Recall that the  finite union of closed sets is closed.
Moreover, since $U^j(k,\cdot)$ is continuous, $\{\theta:r \le U^j(k,\theta)\}$ is closed (preimage of a closed set is closed); and $\{\theta:k\precsim_\theta j\}$ is closed by \ref{Asy} and \ref{PS}. 
\end{proof}
\end{step}
\begin{step}[\textsc{The countably infinite case}]\label{step_Infinite}
 The above argument holds for each $j$ in $\mathbb N$.\footnote{I thank Atsushi Kajii for bringing this subtle issue to my attention.} For countably infinite $A$, we choose $ U: A\times \Theta \rightarrow \R$ such that its graph satisfies $\graph U = \bigcup_{j\in \mbb N} \graph U^j$. Since  Michael's selection theorem is used at each $j$,  for this step we appeal to the axiom of dependent choice. Alternatively, following \cite[p.23]{Kreps_Notes}, let $U(j,\cdot)=U^j(j,\cdot)$ for each $j\in \mathbb N$, and again appeal to the axiom of (dependent) choice.
\end{step}
This completes the proof of \cref{thm_Main}.
\end{proofatend}

\Cref{step_Initial} in the proof of \cref{thm_Main} yields the ``only if'' part of the following alternative definition for a perfectly normal space. 
(For the converse, note that if $f:X\rightarrow \R$ is a continuous function, then  $G_n=\left\{x: \lvert f(x)\rvert <\nicefrac{1}{n}\right\}$ is an open neighbourhood of $F=f^{-1}(0)$ for each $n\in \mbb N$ and $\bigcap_1^\infty G_n=F$.) 
\begin{definition}\label{prop_Zero_set}
 $X$ is perfectly normal if and only if every closed subset $F$ of $X$ is a \emph{zero set}. That is, for some continuous $f: X\rightarrow \R$, $f^{-1}(0)=F$.
 \end{definition} 
By \cref{prop_Zero_set}, every space that is not perfectly normal, contains a closed subset $F$ that is not a zero set. 
If this is the case, then there exist preferences with the property $\left\{\theta: a\sim_\theta b\right\}=F$ for some $a,b\in A$. 
Since such preferences are feasible whenever $\Theta$ fails to be perfectly normal, we have 
\begin{proposition}[proof on \cpageref{proof_Upper_bound}]\label{prop_Upper_bound}
If $\Theta$ is not perfectly normal, then there are preferences with no parametrically continuous representation that satisfy both \ref{Order} and \ref{PS}. 
\end{proposition}       
\begin{proofatend}[\textsc{Proof of \cref{prop_Upper_bound} of \cpageref{prop_Upper_bound}}]\label{proof_Upper_bound}
Let $A=\{a,b\}$ and suppose that $F=\{\theta:a\sim_\theta b\}$ for some closed set $F$ that is not a zero set. Such an $F$ exists whenever $\Theta$ fails to be perfectly normal. 
Since preferences satisfy \ref{Asy} and there are only two actions, there exists a representation of preferences. 
Take $U:A\times \Theta\rightarrow \R$ to be any such representation and define $f:\Theta\rightarrow \R$ to be the map $\theta\mapsto U(a,\theta)-U(b,\theta)$. 
Since $U$ is a representation, $f(\theta)=0$ if and only if $\theta\in F$. Thus $f^{-1}(0)=F$ and, since $F$ is not a zero set,  $f=U(a,\cdot)-U(b,\cdot)$ is discontinuous. 
By the algebra of continuous functions, at least one of $U(a,\cdot)$ and $U(b,\cdot)$ is discontinuous. 
\end{proofatend}

\subsection{Extension to normal parameter spaces}\label{sec_Normal}
In the following theorem, we drop the requirement that $\Theta$ is perfect and instead assume that preferences are perfectly pairwise stable. 
\begin{theorem}\label{thm_Normal}
Let $A$ be countable and let $\Theta$ be normal. 
Preferences have a parametrically continuous representation if and only if  both \ref{Order} and \ref{Perfect} hold. 
\end{theorem} 
\begin{proof}[\textsc{Proof of \cref{thm_Normal}}]\label{proof_Normal} 
The proof that \ref{Order} is necessary when $A$ is countable and $\Theta$ is a singleton follows from classical results \citep{Cantor_1895}. 
The more general case follows by applying the same argument pointwise. 
The proof that \ref{Perfect} (and a fortiori \ref{PS}) is necessary for a parametrically continuous representation is as follows.  
Let $U$ be a parametrically continuous utility representation. 
Fix $a,b\in A$ and let $f:=U(a,\cdot)-U(b,\cdot)$. 
Note that $f$ is continuous by continuity of  $U(a,\cdot)$ and $U(b,\cdot)$. 
Then for each $n\in \mbb N$, $G_n:=\left\{\theta: f(\theta)\leq \nicefrac{1}{n}\right\}$
is a closed subset of 
$$G:=\left\{\theta: a\prec_\theta b\right\}=\left\{\theta: f(\theta)<0\right\}.$$ 
The fact that $f$ is continuous also ensures that $\{\theta: f(\theta)<0\}$ is open. 
Finally, the proof that \ref{Perfect} is necessary follows from  the fact that $\bigcup_1^\infty G_n=G$.

In the remainder of this proof, we show that \ref{Order} and \ref{Perfect} are sufficient for a parametrically continuous representation. 
In \cref{lem_Pseudo} of sub\cref{sec_Pseudometric}, we show that \ref{Perfect}, allows us to construct a pseudometric $p$ on $\Theta$. 
Like a metric, the pseudometric $p:\Theta^2\rightarrow \R $ is continuous, nonnegative, symmetric and satisfies the triangle inequality. In contrast with a metric, the pseudometric $p$ may satisfy $p(\zeta,\eta)=0$ for $\zeta\neq \eta$. 
Such pairs are incomparable under $p$ and the collection of such pairs forms a binary relation $\bowtie$ on $\Theta$.   

\Cref{lem_Pseudo} shows that $\bowtie$ is an equivalence relation. 
This ensures that $\bowtie$ partitions $\Theta$ and that we may pass to the quotient space $\Theta_p\defeq \Theta_{/\bowtie}$. (This space identifies points that are incomparable under $p$.) 
This identification ensures that, with the open sets generated by $p$, $\Theta_p$ is a $\mc T_1$ space. 
In fact, $\Theta_p$ is perfectly normal because every  pseudometrizable $\mc T_1$ space is metrizable. 

By \cref{lem_Pseudo},  $\zeta \bowtie \eta$ implies that  $\prec_\zeta$ equals $\prec_\eta $. 
\Cref{thm_Main} then ensures the existence of a parametrically continuous representation $U_p:A\times \Theta_p\rightarrow \R$.  
Finally, we extend $U_p$  from $\Theta_p$ to $\Theta$ as follows. 
For each $a\in A$, let $f_p:\Theta\rightarrow \Theta_p$ be the projection from $\theta\in \Theta$ to its equivalence class in $\Theta_p$. 
Then, for each $a\in A$, let $g_a:=U_p(a,\cdot)$ and take $U(a,\cdot):=g_a\ccirc f_p$. 
Then $U: A\times \Theta\rightarrow \R$ is constant on each equivalence class of $\bowtie$, and parametric continuity of $U$ follows from continuity of  $p$. 
(Each equivalence class of $\bowtie$  is closed in $\Theta$.) 
\end{proof}

\subsection{Pseudometrics for nonmetrizable spaces}\label{sec_Pseudometric}
Whilst this subsection is essential to the proof of \cref{thm_Normal}, 
it is also motivated by the needs of standard tools in the analysis of policy  and in particular the envelope theorems of \citet{MS_Envelope_theorems} which require that $\Theta$ is a metric space and impose no topological structure on $A$. 
When $\Theta$ is nonmetrizable, we may use preferences to generate a pseudometric and work on the quotient space that we described in the proof of \cref{thm_Normal}. 
\begin{lemma}[proof on \cpageref{proof_Pseudo}]\label{lem_Pseudo}
Let $A$ be countable and let $\Theta$ be normal. 
If \ref{Order} and \ref{Perfect} hold, then there exists a continuous pseudometric $p:\Theta^2\rightarrow \R_+$ such that, for every $\zeta, \eta\in \Theta$,  $p(\zeta,\eta)=0$ implies that $\prec_\zeta $ equals $\prec_\eta $.
\end{lemma}
\begin{proofatend}[\textsc{Proof of \cref{lem_Pseudo} of \cpageref{lem_Pseudo}}]\label{proof_Pseudo}
For any given $a,b\in A$, the set $F_{ab}=\left\{\theta: a\sim_\theta b\right\}$ is closed by \ref{Perfect}.  
Moreover, \ref{Asy} and \ref{Perfect} ensure the existence of a countable and decreasing sequence of open sets with intersection equal to $F_{ab}$.
Since $\Theta$ is normal, the argument of \cref{step_Initial} of \cref{thm_Main} ensures the existence of a continuous function $f_{ab}: \Theta\rightarrow [-1,1]$ such that $f_{ab}^{-1}(0)=F_{ab}$ and $0<f_{ab}(\theta)$ if and only if $a\prec_\theta b$. 

Let $p_{ab}(\zeta,\eta)\defeq \lvert f_{ab}(\zeta)-f_{ab}(\eta)\rvert$ for each $\zeta,\eta\in \Theta$. 
Clearly $p_{ab}: \Theta^2\rightarrow \R$ inherits positivity, symmetry and the triangle inequality from $\lvert\cdot\rvert$ on $\R$. Moreover, $p_{ab}(\zeta,\eta)=0$ implies [$a\prec_\zeta b$ if and only if $a\prec_\eta b$]. (If $b\precsim_\zeta a$ and $a\prec_\eta b$, then $ f_{ab}(\zeta)\leq 0< f_{ab}(\eta)$, so that $p_{ab}(\zeta,\eta)\neq 0$.)

The above argument generates a collection of continuous pseudometrics  $\Pi\defeq \left\{p_{ab}: a,b\in A\right\}$ on $\Theta$. 
Since $A$ is countable, so is $\Pi$. 
Let $\{p_1,p_2,\dots\}$ be any enumeration of $\Pi$ and let  $p\defeq \sum_1^\infty 2^{-n}p_n$. 
Then for every $\zeta, \eta\in \Theta$, $p(\zeta,\eta)=0$ if and only if $p_n(\zeta,\eta)=0$ for every $n$. 

It remains to check that $p$ is indeed a continuous pseudometric. 
Since each $p_n$ is nonnegative and symmetric with values in $[0,2]$, so is $p$. 
Moreover, for each $m$, the partial sum $\sum_1^m2^{-n}p_n$ satisfies the triangle inequality by induction since the sum of two pseudometrics preserves this inequality. 
The sandwich or squeeze lemma for limits of sequences then ensures $p$ also satisfies the triangle inequality.  
Continuity of $p$ follows by uniform convergence of the continuous partial sums. 
This completes the proof of the lemma.
\end{proofatend}

\section{Extensions to joint continuity}
\label{sec_Discussion}\label{sec_Joint_continuity}
In sub\cref{sec_Discussion_axioms} we noted that similar results in the literature assume the closed graph axiom \ref{CG}. 
Moreover, recall that this axiom requires that $A$ is a topological space and that $A\times \Theta$ is endowed with the product topology. 
Thus, the first and most basic difference between the results of \cref{sec_Parametric_continuity} and related results in the literature on joint continuity (\citet{Hildenbrand_Joint_continuity,Mas-Colell_Joint_continuity,Levin_Joint_continuity,CCM_Joint_continuity}) is that  we do not require that  $A$ or $A\times \Theta$  are topological spaces. 

Of course, in the case where $A$ is a topological space, much more can be said. 
Recall from \cref{prop_Metrizable_Theta} that  \ref{PS}, \ref{Perfect} and \ref{CG} are equivalent when the following three conditions hold:  $A$ is a discrete topological space;  $\Theta$ is perfectly normal; and \ref{Order} holds. 
It stands to reason that when these three conditions hold, we can strengthen the conclusions of \cref{thm_Main} and obtain a continuous representation $U:A\times \Theta\rightarrow \R$. 
In the literature, a continuous representation is said to be jointly continuous. 


\begin{corollary}[proof on \cpageref{proof_Joint_perfect}]\label{cor_Joint_perfect}
Let $A$ be countable and discrete and let $\Theta$ be  perfectly normal. 
Preferences have a (jointly) continuous representation  if and only if both \ref{Order} and \ref{PS}  hold. 
\end{corollary}

\begin{proofatend}[\textsc{Proof of \cref{cor_Joint_perfect} from \cpageref{cor_Joint_perfect}}]\label{proof_Joint_perfect}
Necessity of the axioms for a continuous representation $U:A\times \Theta\rightarrow \R$ follows \cref{thm_Main} and the  fact that joint continuity $U$ is stronger than parametric continuity of $U$. 

The fact that the axioms are sufficient for $U$ jointly continuous is as follows. 
Fix $(a,\theta)\in A\times \Theta$ and, for some directed set $D$, consider an arbitrary net $E=\left((a_\nu,\theta_\nu)\right)_{\nu\in D}$ in $A\times \Theta$ with limit $(a,\theta)$. 
We show that $U(a_\nu,\theta_\nu)\rightarrow U(a,\theta)$. 
Recall that $(a,\theta)$ is the limit of $E$ if and only if, for every neighborhood $N$ of $(a,\theta)$, there exists $\mu\in D$ such that for every $\nu\geq \mu$, $(a_\nu,\theta_\nu)\in N$. 
Since $A$ is discrete, $\{a\}$ is open and for some open neighbourhood $N_\theta$ of $\theta$  in $\Theta$, the set $\left\{a\right\}\times N_\theta$ is an (open) neighborhood of $(a,\theta)$ in the product topology on $A\times \Theta$. 
Thus, there exists $\mu$ such that for every $\nu\geq \mu$, $U(a_\nu,\theta_\nu)=U(a,\theta_\nu)$. 
Finally, \cref{thm_Main} ensures that $U(a,\theta_\nu)\rightarrow U(a,\theta)$. 
\end{proofatend}
An example of a countable set $A$ that is not discrete is the set of rational numbers considered as subspace of $\R$. 
\Cref{cor_Joint_perfect} does not hold for such  sets. 
An alternative proof of \cref{cor_Joint_perfect} ought to be  possible via the following conjecture to extend  the main theorem of \citet{Levin_Joint_continuity}. 
This is because every countable topological space is \emph{second countable} (has a countable basis). 
Also, when $A$ is discrete it is also \emph{locally compact}: each point in $A$ has a compact neighbourhood. 
Finally, 
\citeauthor{Levin_Joint_continuity} takes  $\Theta$ to be metrizable. 

\begin{conjecture}\label{thm_Uncountable}
Let $A$ be second countable and locally compact and let $\Theta$ be perfectly normal. 
Preferences have a continuous representation if and only if both \ref{Order} and \ref{CG} hold.
\end{conjecture}
The grounds for \cref{thm_Uncountable} are the following. 
At the bottom of p.717, \citet{Levin_Joint_continuity}  only uses the fact that metrizable spaces are perfectly normal.
Moreover, when $A$ is second countable and $\Theta$ is perfectly normal, then  $A\times \Theta$ is perfectly normal (\citet[p.249]{Tkachuk_Cp_theory}). 
In turn, $\Theta\times A^2$ is also perfectly normal, so that the set $\left\{(\theta,a,b): a\precsim_\theta b\right\}$ is a zero set. 
A complete proof would need to verify that the remaining arguments of \citet{Levin_Joint_continuity} carry over to the setting where $A\times \Theta$ is perfectly normal. 
(The proof that \ref{CG} is necessary for joint continuity is straightforward.)

It may be possible to extend \citeauthor{Levin_Joint_continuity}'s theorem to allow for $A\times \Theta$ that is not perfectly normal without strengthening \ref{CG}. 
However, by the following proposition, the best one can hope for is a \emph{separately continuous} representation ($U(a,\cdot)$ is continuous for each $a$ and $U(\cdot,\theta)$ is continuous for each $\theta$). 
For the jointly continuous case, this negative result is analogous to \cref{prop_Upper_bound}.
\begin{proposition}[proof on \cpageref{proof_Sorgenfrey}]\label{prop_Sorgenfrey}
If $A\times \Theta$ is not perfectly normal, then there are preferences with no continuous representation that satisfy both \ref{Order} and  \ref{CG}. 
\end{proposition}
\begin{proofatend}[\textsc{Proof of \cref{prop_Sorgenfrey} of \cpageref{prop_Sorgenfrey}}]\label{proof_Sorgenfrey}
By assumption, there exists a closed, nonzero subset $F$ of $A\times \Theta$. 
Let $\left\{(a,\theta): a\sim_\theta b\right\}=F$ and let preferences satisfy \ref{Order} and \ref{CG} on $A\bs \{b\}$. 
Then every representation has $U(a,\theta)-U(b,\theta)=0$ for every $(a,\theta)\in F$. 
Let $U'$ be the following transformation of $U$. 
For every $a\in A$, $U'(a,\cdot)=U(a,\cdot)-U(b,\cdot)$. 
Then $U': A\times \Theta\rightarrow \R$ satisfies  $U'(F)=0$. 
That is, $(a,\theta)\in F$ implies $U'(a,\theta)=0$. 
Let $b\prec_\theta a$ for every $(a,\theta)$ in the open set $(A\times \Theta)\bs F$. 
Since $F$ is closed and $b\precsim_\theta a$ for every $(a,\theta)\in A\times \Theta$, preferences satisfy \ref{Order} and \ref{CG} on all of $A$.  
Since $U'(b,\cdot)$ is identically equal to zero, $0<U'(a,\theta)$ for every $(a,\theta)\not\in F$. 
Since $F=(U')^{-1}(0)$ is not a zero set, $U'$ is discontinuous on $A\times \Theta$. 

For an explicit example consider the Sorgenfrey line $\mbb L$. 
This is the unit interval $I$ where the basic open sets are half-open intervals $[r,s)$ such that $r<s$  in $I$. 
$\mbb L$ is a well-known example of a perfectly normal,   separable space that is not second countable and such that the Sorgenfrey plane $\mbb L^2$ is not normal. 
Take $A$ to be the discrete union of $\mbb L$ and $\{b\} $ for some $b\not\in \mbb L$ and take $\Theta=\mbb L$. 
Finally, take $F$ to be the anti-diagonal of $\mbb L^2$  and let $\left\{\prec_\Theta\right\}$ be such that for each $-r\in\Theta$, $r$ is the worst element in $A\bs \{b\}$; $\prec_{-r}$ assigns higher order to elements that are further from $r$ according to the standard metric on $\R$; and, moreover, for each feasible $\epsilon>0$,  $-\epsilon+r\sim_{-r} \epsilon+r$. 
Finally, for every rational number $q\in \mbb L$, let $b\sim_{-q} q$; and for every irrational number $s\in \mbb L$, suppose that $b\prec_{-s} s$. 

Clearly $\{\prec_\Theta\}$ satisfies \ref{Order}. 
To check \ref{CG}, suppose otherwise that $a_\nu\sim_{\theta_\nu} b$ for every $\nu$ and $(a_\nu,\theta_\nu)\rightarrow (a,\eta)$ such that $b\prec_\eta a$. 
Then by construction, each $\theta_\nu$ is a rational number and $a_\nu=-\theta_\nu$. 
Moreover, since $a_\nu\rightarrow a$ and $\theta_\nu\rightarrow \eta$, we have $a=-\eta$. 
Since the anti-diagonal of $\mbb L^2$ is a discrete, there exists a finite number $\mu$ such that $(a_\nu,\theta_\nu)=(a,\eta)$ for every $\nu\geq \mu$, a contradiction of the assumptions regarding the sequence. 
\end{proofatend}


When $\Theta$ is normal, but not perfect, the product $A\times \Theta$ is not perfectly normal. 
Yet, provided $A$ is countable and discrete and  preferences satisfy \ref{Perfect},  \cref{thm_Normal} and the same proof as that of \cref{cor_Joint_perfect} yield
\begin{corollary}\label{cor_Joint_normal}
Let $A$ be countable and discrete and let $\Theta$ be  normal. 
Preferences have a continuous representation  if and only if both \ref{Order} and \ref{Perfect}  hold. 
\end{corollary}
\section{A final remark}
\Cref{prop_Sorgenfrey} implies that an  extension of \cref{cor_Joint_normal} to uncountable sets of actions requires  a  strengthening of \ref{CG} along the lines of  ``preferences have a perfectly closed graph''. 
In our proof, the fact that the set of actions is countable is essential. 
For instance, the uncountable collection of pseudometrics does not combine to form a single pseudometric (see \citet[Theorem 13]{Kelley_Topology} and the surrounding discussion on uniform spaces). 
This extension appears to be an interesting question that is open to  future research. 

%


\begin{appendices}
\section{Proofs}\label{sec_Proofs}
\printproofs
 \end{appendices}
 \section*{References}
 \bibliographystyle{model5-names}
\bibliography{MyBib}
\end{document}